\newtheorem{theorem}{Theorem}
\newtheorem{prop}{Proposition}
\newtheorem{Def}{Definition}
\title{Barter Exchange via Friends' Friends}
\author {
        Yue Zheng,
        Tianyi Yang,
        Wen Zhang and
        Dengji Zhao\\
}
\begin{document}

\maketitle

\begin{abstract}
Barter exchange studies the setting where each agent owns a good, and they can exchange with each other if that gives them more preferred goods. This exchange will give better outcomes if there are more participants. The challenge here is how to get more participants and our goal is to incentivize the existing participants to invite new participants. However, new participants might be competitors for the existing participants. Therefore, we design an exchange mechanism based on the classical Top Trading Cycle (TTC) algorithm to solve their conflicts. Our mechanism is truthful in terms of revealing their preferences and also guarantees that inviting all their neighbors is a dominant strategy for all participants. The mechanism can be applied in settings where more participants are preferred but no extra budget to reach new participants. 
\end{abstract}

\noindent 
\section{Introduction}
Recently more and more people enjoy the convenience of online shopping, while at the same time, how to deal with the goods they do not need is also a problem for them. There are a variety of ways to deal with these goods, such as donating, selling, or throwing them away. Sometimes people would set up stalls outside their house to sell these unwanted goods to neighbors, but it is hard to settle down a price that they are both satisfied with. Barter exchange is an old-fashioned way to exchange goods, where the money is not used as a medium of the exchange. It is a simple and eco-friendly way for people to deal with unwanted goods. Therefore, barter exchange seems more attractive to people. 

Barter exchange has already grown up in a modern way. Since the Internet can bring traders together, many online bartering websites and communities are growing up. Nowadays, people can find many reliable and large-scale bartering platforms to swap their clothes, services, and even their houses. However, the choices for users are limited and the platforms restrict the swaps between pairs of users. For example, GoSwap.org is a website where an individual can swap her house with another user. However, user $a$ might fail to swap with $b$ when $b$ prefers another individual $c$'s house, where $c$ is their common friend but she has not registered for the website. Even if $c$ prefers $a$'s house, no pairs among them can successfully swap. However, if $c$ is invited to join the website and a trade cycle can be formed, then $a,b,c$ can each get their most favorite house. 

Therefore, in this paper, we propose a novel matching method for an organizer to hold a larger barter exchange. We model all the agents (individuals) as a large scale social network where each agent is linked with some other agents (known as neighbors). The organizer initially invites some agents to join the bartering market. For agents who have already been in the market, they are incentivized to further invite their neighbors and so on. Our mechanism guarantees that each agent will get a good that is at least as good as the one when not inviting neighbors. In the end, all agents in the social network will be in the market and may get a better exchange. 

The pattern of barter exchange is in the class of one-sided matching problems, which has been well studied in the traditional settings. Normally, agents in the one-sided market need to be matched with goods~\cite{abdulkadiroglu2013matching}. These goods in fact do not have preferences over the agents or the allocation. A lot of related theoretical work are at the interface of computer science, economics, and game theory~\cite{haeringer2018market}. In addition, most of the work captured many real-world problems, such as the kidney exchange problem~\cite{roth2004kidney,sonmez2020incentivized}, the assignment problem~\cite{gardenfors1973assignment,andersson1998contract}, the house allocation problem~\cite{hylland1979efficient} and the housing market~\cite{shapley1974cores,anno2015short}. Our mechanism is similar to the setting of the housing market, where every agent initially owns a good and wants to exchange her good with others such that she can receive a better one.

~\citeauthor{shapley1974cores}~\citeyear{shapley1974cores} firstly described the Top Trading Cycle (TTC) algorithm in the housing market, which is a classical and elegant algorithm with many desirable properties. The TTC algorithm has a great influence on one-sided matching literature. Most of these work focused on the game theoretical analysis and assume that agents are able to communicate with each other and thus, they can directly trade~\cite{roth2005pairwise,aziz2012housing,damamme2015power}. However, this assumption fails to cope with the reality that some agents are unable to contact with each other, especially in a large market. Therefore, it is more natural to assume that agents form a network.

Though there have been some research studying matching in social networks, they mainly focused on two-sided matching problems.~\citeauthor{jackson2010social}~\citeyear{jackson2010social}, ~\citeauthor{calvo2007networks}~\citeyear{calvo2007networks}, and~\citeauthor{calvo2004effects}~\citeyear{calvo2004effects} analyzed the factors of the network that would affect labors' staying or leaving the market. Similarly, ~\citeauthor{arcaute2009social}~\citeyear{arcaute2009social} studied job market problems with the constraint that workers learn only about job information through social contacts. They show that a simple variation of the Gale-Shapley mechanism converges to a locally stable solution. ~\citeauthor{gourves2017object} ~\citeyear{gourves2017object} studied the housing market in a social network and focused on the trade between pairs of neighbors. Moreover,~\citeauthor{bailey2018economic}~\citeyear{bailey2018economic} studied the effects of social interactions on individuals’ housing market expectations. The purpose of considering social network in these work is to achieve a more realistic model which considers the connections of agents. However, the networks in their settings are static and given in advance. In this paper, we do not only consider the connections of agents but also utilize the propagation via the social network to attract more agents to participate in the exchange. This idea is inspired by the work of~\citeauthor{li2017mechanism}~\citeyear{li2017mechanism} and~\citeauthor{zhao2019selling}~\citeyear{zhao2019selling}. They studied selling items via social networks and proposed truthful mechanisms to incentivize participants to help a seller propagate the sale. Our work, to the best of our knowledge, is the first work to study one-sided matching problem in a social network, where the social network is not given in advance but is generated by the agents' propagation. 

\section{The Model}
We consider a barter exchange problem in a social network represented by an indicted graph $G=(V,E)$, which involves a market organizer $o$ and $n$ agents (nodes) denoted by $N=\{1,2,\cdots,n\}$, i.e., $V=N\cup\{o\}$. For each agent $i\in N$, she is endowed with a good $g_i$ ready for an exchange. Let $\mathcal{G}=\{g_1,\cdots,g_n\}$ be the set of all goods brought by the agents. We also use $i$ to indicate the owner of the good $g_i$. For each $i\in V$, she has a set of neighbors $r_i\subseteq V$. That is for any $j\in r_i$, there is an edge $e(i,j)$ representing that $i$ can directly exchange information with $j$.

Traditionally, the organizer can find an allocation among his friends by using the Top Trading Cycle (TTC) Algorithm, which satisfies desirable properties. However, without any further propagation, agent $i$ can only exchange her good with neighbors as they are not aware of the rest of the network. Therefore, there exists a limited number of participants (i.e., a limited number of goods) in the traditional setting. The goal of the organizer is to design a mechanism that incentivizes agents to invite their neighbors to join the barter exchange so that they can get better exchanges. In order to achieve a better allocation, it would be better if all agents in the network could join the market. Nevertheless, agents might be unwilling to invite neighbors as they may compete for the same good. Thus, the exchange mechanism should guarantee that each agent will get a good that is at least as good as the one she can get when not participating in. 

Let us first formally describe the model. For each $i\in N$, $i$ has a preference function $\mathcal{P}_i:\mathcal{G}'\rightarrow \succ_i$, for $\forall\ \mathcal{G}' \subseteq \mathcal{G}$. The preference function outputs a strict, complete and transitive preference over the goods in $\mathcal{G}'$, denoted by a linear order $\succ_i$, where $a\succ_i b$ means that $i$ prefers good $a$ to good $b$. Let $\theta_i=(\mathcal{P}_i,r_i)$ be the type of agent $i\in N$, $\theta=(\theta_1,\cdots,\theta_n)$ be the type profile of all agents and $\theta_{-i}$ be the type profile of all agents except $i$. $\theta$ can be also written as $(\theta_i,\theta_{-i})$. Let $\Theta_i$ be the type space of agent $i$ and $\Theta$ be the type profile space of all agents. 

The exchange mechanism requires that each agent, who participates in the bartering market, to invite all their neighbors and to report her preference function. In this paper, an agent $i$'s action in the mechanism is defined as $\theta_i'=(\mathcal{P}_i',r_i')$, where $\mathcal{P}_i'$ is the preference function reported by agent $i$ and $r_i'\subseteq r_i$ is the set of neighbors that $i$ actually has invited. The action profile of all agents is denoted as $\theta'=(\theta_1',\cdots,\theta_n')$. Let $\theta_j'=nil$ if $j$ has not been informed about the bartering exchange. 

Let $G(\theta')=(V(\theta'),E(\theta'))$ be the graph generated by the reported type profile $\theta'$, where $V(\theta')\subseteq N\cup\{o\}$. The construction is generated as follows:
\[\begin{cases}
  i\in V(\theta')& \text{if $i=o$ or $i\in r_j'$ where $j\in V(\theta')$}\\
  e(i,j)\in E(\theta')&\text{if $i\in r_j'$, $j\in r_i'$ and $i,j \in V(\theta')$}
\end{cases}\]
In fact, the reported type may not be the same as the true type. Therefore, $i\notin V(\theta')$ means that $i$ is not invited and hence, $\theta_i'=nil$. Figure~\ref{1a} shows a simple social network and Figure~\ref{1b} is one corresponding generated graph $G(\theta')$. In Figure 1, $o$ is the organizer and other numbered nodes are agents. Suppose $\theta'$ is the reported type profile where all agents report their true type except that $2$ misreported $\theta_2'=(\mathcal{P}_2',\emptyset)$, i.e., agent $2$ does not invite her neighbors. Since agent $6,7,9$ cannot be in the market without $2$'s invitation, we set their reported type to be $nil$.
\begin{figure}[htbp]%
	\centering
	\subfigure[]{
		\label{1a}
		\includegraphics[width=0.26\linewidth]{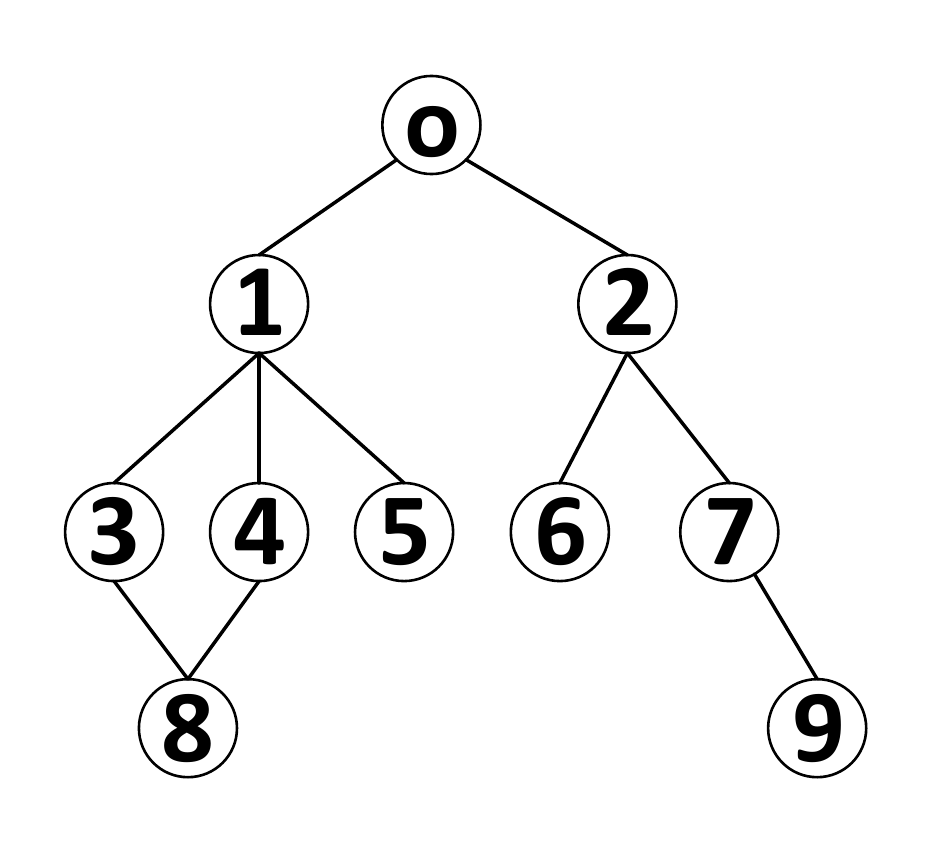}}
	\hspace{20pt}
	\subfigure[]{%
		\label{1b}%
		\includegraphics[width=0.26\linewidth]{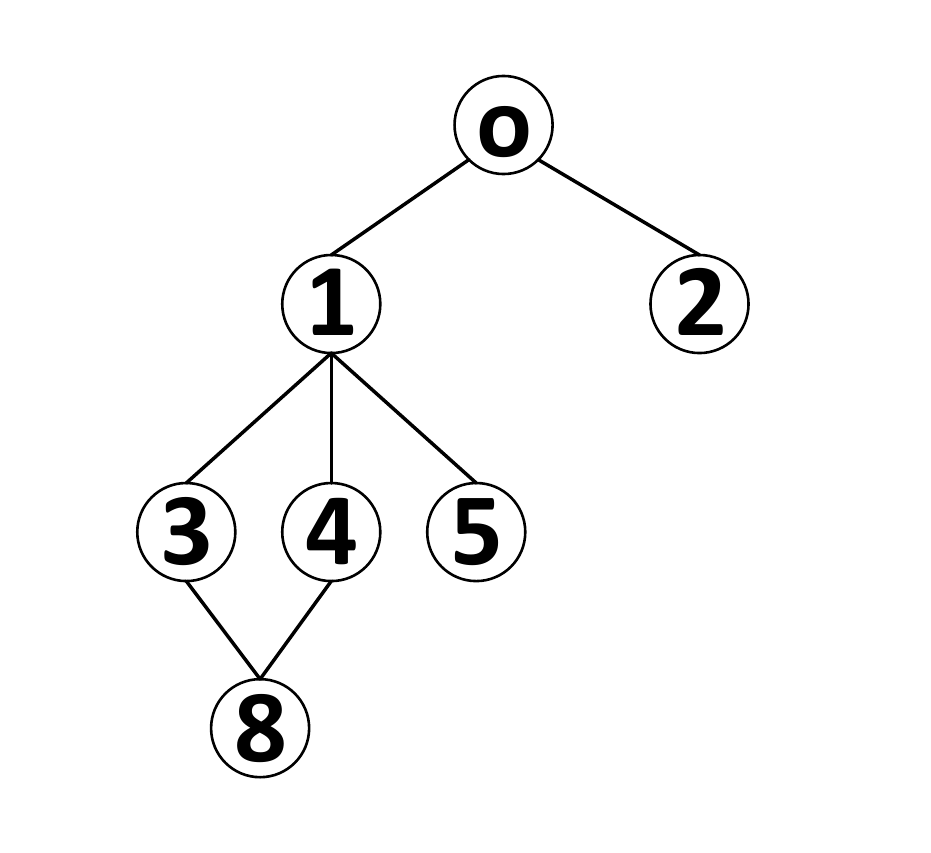}}\\
	\caption{An example of a social network $G$ and its generated graph $G(\theta')$ where agent $2$ did not invite agents $6,7$.}
\end{figure}

In the rest of the paper, we only focus on the generated graph. Given a generated graph $G$, we define some basic concepts as follows.
\begin{itemize}
    \item If all the paths from $o$ to $j$ have to pass $i$, we say $i$ is $j$'s ancestor and $j$ is $i$'s descendant.
    \item For each $i\in V\setminus \{o\}$, let $D_i$ be the set of descendants of $i$ ($i$ is included). Let $A_i$ be the set of ancestors of $i$. 
    \item Let $G_i=(V_i,E_i)$ be the subgraph induced by $D_i$, where $V_i=D_i$ and $e(j,k)\in E_i$ if $e(j,k) \in E$ and $j,k\in D_i$. Note that $i$ is the ancestor for all nodes in $D_i$. 
\end{itemize}

A one-sided matching mechanism produces an allocation rule which gives a matching (allocation). Here, we give some basic and important definitions of the matching. 
\begin{Def}
\rm (\textit{Matching}) Given a barter exchange problem, a matching mechanism outputs a \textit{matching} $\mu_{G,\theta'}: N\rightarrow \mathcal{G}$, which maps each good in $\mathcal{G}$ to an agent in $N$ under the reported type profile $\theta'$ and the generated graph $G$, where $\mu_{G,\theta'}(i)$ denotes the good allocated to agent $i\in N$ in $\mu_{G,\theta'}$.
\end{Def}
\begin{Def}
\rm (\textit{Pareto Optimal}) Given a barter exchange problem and the generated graph $G=(V,E)$, we say a matching $\mu$ is Pareto dominated by some other matching $\nu$ if for each $i\in V\setminus \{o\}$:
$\nu(i)\succeq_i \mu(i)$, and for some $j\in V\setminus \{o\}$: $\nu(j)\succ_i \mu(j)$. That is all agents in $V\setminus \{o\}$ weakly prefer $\nu$ over $\mu$, and at least one agent strictly prefers $\nu$. We say $\mu$ is \textit{Pareto Optimal} if no other matching Pareto dominates $\mu$.
\end{Def}

\begin{Def}
\rm (\textit{Individual Rationality}) A matching mechanism is \textit{individually rational} (IR) if for every agent $i\in N$, all $\theta', \theta''\in \Theta$, we have $\mu_{G,\theta'}(i)\succeq_i \mu_{G_i,\theta''}(i)$ where $\theta_i'=(\mathcal{P}_i,r_i')$ and $\theta''$ is the corresponding reported profile of the agents in $D_i$ such that $\theta_j''=nil$ for all $j\notin D_i$ and others are the same as those in $\theta'$. 
\end{Def}

Different from the traditional definition of IR, here IR indicates that the good agent $i$ gets from the market is at least as good as the one (according to her true preferences) she gets when she exchanges in local (among her descendants), no matter what the others do. Moreover, in our setting, we want to incentivize agents to not only just report their preferences truthfully, but also invite all their neighbors to join the bartering market. Therefore, we extend the traditional definition of incentive compatibility to cover the invitation part.
\begin{Def}
\rm (\textit{Incentive Compatibility}) We say a matching mechanism is \textit{incentive compatible} (IC) if for all $i\in V\setminus \{o\}$ and all $\theta', \theta''\in \Theta$, we have $\mu_{G,\theta'}(i)\succeq_i \mu_{G',\theta''}(i)$, where $\theta'=(\theta_i,\theta_{-i}')$ and $\theta''=(\theta_i',\theta_{-i}')$ and $G'$ is the generated graph under the reported type profile $\theta''$.
\end{Def}

In this paper, we design a one-sided matching mechanism that is IC, IR, and outputs a Pareto optimal matching. 

\section{TTC in Social Networks}
Given the generated graph $G$, we first consider whether the Top Trading Cycle algorithm can be directly extended to the network setting without sacrificing its properties. 
\begin{framed}
	\textbf{The Top Trading Cycle (TTC) Algorithm}\\
	\noindent\rule{\textwidth}{0.35mm}
	Given the reported type profile $\theta'$, generate the graph $G=(V,E)$. 
    \begin{enumerate}
    \item Initially set $N'=V\setminus \{o\}$.
    \item While $N'\neq \emptyset$:
    \begin{enumerate}
        \item Construct a directed preference graph $G'=(V',E')$, where $V'=N'$ and for each $i,j\in N'$, the direct edge $e'(i,j)\in E'$ if and only if $g_j\succ_i g_k$ for all other $k\in V'$, i.e., each agent $i\in N'$ points to her most favorite remaining good (including $g_i$).
        \item Find an arbitrary directed cycle $c=(V_c,E_c)$ in $G'$. Let agents in the cycle trade directly, i.e., for every directed edge $e'(i,j)\in E_c$, set $\mu_{G,\theta'}(i)=g_j$. Remove all agents in the cycle from $N'$.
    \end{enumerate}
\end{enumerate}
\end{framed}
The TTC algorithm is simple and intuitive, but it is problematic if we directly apply the algorithm in the network setting.
\begin{prop}
\rm Given the generated graph $G$, the extended incentive compatibility and individual rationality, the TTC algorithm in social networks is neither IC nor IR.
\end{prop}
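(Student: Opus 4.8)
The plan is to prove the proposition by exhibiting two explicit counterexamples on small generated graphs: one instance on which TTC applied directly to $G$ violates the (extended) individual rationality, and one on which it violates the (extended) incentive compatibility. In both cases the intuition is the same --- the goods brought in by extra participants can form a top trading cycle that diverts away the good an agent would otherwise have traded into --- but the two violations exploit this from opposite directions.

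For the IR part I would choose a graph in which some agent $i$ has a \emph{strict} subset of agents as descendants, so that the local market $G_i$ genuinely contains fewer goods than the full market. Take $o$ linked to agents $1$ and $2$ and agent $1$ linked to agent $3$, so that $D_1=\{1,3\}$ while $2\notin D_1$; let the true preferences be $g_3\succ_1 g_1\succ_1 g_2$, $\ g_3\succ_2 g_2\succ_2 g_1$, and $g_2\succ_3 g_1\succ_3 g_3$. When all three agents report truthfully and invite all neighbours, the only cycle in the first round of TTC is $2\leftrightarrow 3$, so agents $2$ and $3$ trade and agent $1$ is left alone with $g_1$; in the restricted market $G_1$ (agents $1$ and $3$ only) agents $1$ and $3$ swap and agent $1$ obtains $g_3$. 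Since $g_3\succ_1 g_1$, the IR inequality $\mu_{G,\theta'}(1)\succeq_1\mu_{G_1,\theta''}(1)$ fails. The only care needed here is to confirm the descendant structure ($3\in D_1$, $2\notin D_1$) and to trace the two TTC executions, neither of which has any cycle-selection ambiguity.

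For the IC part, note first that if agent $i$ keeps her invitation set fixed then the mechanism reduces to ordinary TTC on a fixed agent set, which is strategy-proof with respect to preferences; hence any IC violation must come from the invitation dimension. I would take a graph in which agent $i$ has a neighbour whose whole downstream subtree is reachable only through $i$, so that $i$ can unilaterally exclude it by withholding one invitation, with preferences arranged so that a good inside that subtree forms a cycle with a third agent who would otherwise have traded with $i$. Concretely, with $o\to 1$, $1\to 2$, $1\to 3$, $3\to 4$ and suitably chosen preferences, agent $1$ is pushed down to $g_3$ when she truthfully invites everyone (because a $\{2,4\}$ cycle removes $g_2$ and $g_4$ before agent $1$ can act), but trades into $g_2$ when she withholds the invitation to agent $3$ (excluding $3$ and $4$, so that agents $1$ and $2$ swap). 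Since $g_2\succ_1 g_3$, this is a profitable deviation, so IC fails. The routine steps are checking feasibility of the deviation (that agent $3$'s subtree is unreachable except via agent $1$) and tracing TTC with and without $\{3,4\}$.

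The main obstacle is purely combinatorial: engineering preference profiles so that the harmful cycle forms \emph{exactly} when the extra agents are present and so that deleting those agents re-routes TTC in $i$'s favour, while keeping both instances small and their ancestor/descendant structure clean. A secondary subtlety worth flagging explicitly is that TTC always guarantees every agent a good at least as good as her own endowment, so the IC violation cannot be phrased as ``avoiding a loss'': in the construction agent $i$ must already receive something strictly better than $g_i$ under truthful invitation, and strictly better still after hiding a neighbour.
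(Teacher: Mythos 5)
Your proposal is correct and takes essentially the same route as the paper: both failures are established by small explicit counterexamples in which an agent either gains by withholding an invitation (IC) or does strictly better trading only within her descendant subgraph than under direct TTC on the full generated graph (IR); your IR instance checks out, and your IC instance can indeed be completed, e.g.\ with $g_2\succ_1 g_3\succ_1 g_1\succ_1 g_4$, $g_4\succ_2 g_1\succ_2 g_2\succ_2 g_3$, $g_1$ top for agent $3$, and $g_2$ top for agent $4$. The only real difference is economy: the paper uses a single three-agent line $o$--$1$--$2$--$3$ in which the same agent $2$ witnesses both violations, whereas you use two separate instances and leave the IC preference profile unspecified (a routine, and as shown feasible, completion).
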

\begin{proof}
\rm We use an example to give the proof. Consider the setting given in Figure 2, where the nodes $o,1,2,3$ and the edges colored in black form a generated social network, and the blue edges represent agents' preferences. The preference order of each agent is listed as follows.
\[\begin{cases}
  g_3\succ_1 g_2\succ_1 g_1\\
  g_1\succ_2 g_3\succ_2 g_2\\
  g_1\succ_3 g_2\succ_3 g_3
\end{cases}\]
Here, if we directly apply the TTC algorithm, then agent $1$ and agent $3$ will exchange their goods, while agent $2$ leaves with her own good. However, as is shown in Figure~\ref{2b}, if agent $2$ does not invite $3$, then agent $1$ and agent $2$ can exchange their goods. Since agent $2$ gets a better good by not inviting her neighbor $3$, directly applying TTC in social networks is not IC. 

Consider the case in Figure~\ref{2c}, when $2$ does not participate in the market, then agent $2$ and agent $3$ exchange their goods, which is also better than the case in Figure~\ref{2a}. Thus, agent $2$ gets a worse good when participating in the market, i.e., TTC in the network is not IR. 
\end{proof}
\begin{figure}[htbp]%
	\centering
	\subfigure[]{
		\label{2a}
		\includegraphics[width=0.3\linewidth]{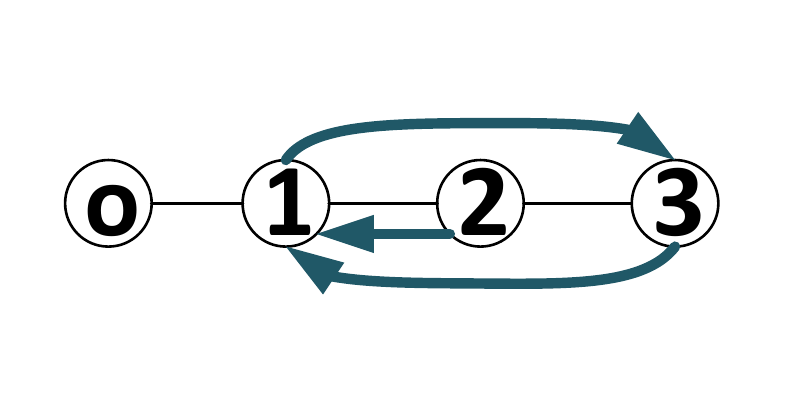}}
	\subfigure[]{%
		\label{2b}%
		\includegraphics[width=0.3\linewidth]{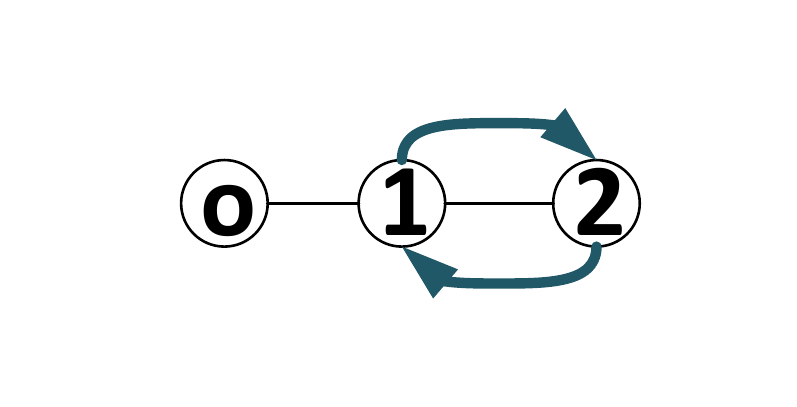}}
	\subfigure[]{%
		\label{2c}%
		\includegraphics[width=0.3\linewidth]{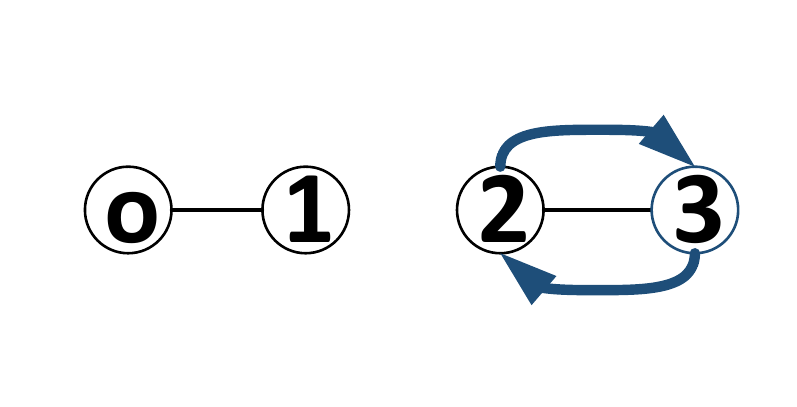}}\\
	\caption{A line-structured generated graph and the corresponding preference graph.}
\end{figure}
The example in Figure 2 reveals the problem that agents will not be incentivized to invite neighbors if that brings them competitors. Moreover, agents are not willing to participate in the mechanism since they probably get worse goods. 

The TTC algorithm in social networks fails to guarantee individual rationality and the invitation part of the incentive-compatibility. In this paper, we propose a novel mechanism which does not only incentivize agents to invite their neighbors but also promises that inviting neighbors can help them to get better goods.

\section{Impossibility Result}
In the bartering market, it would be nice if agents are allowed to point to all the goods in the market. It is more attractive for agents to participate in and further propagate the market, especially when the mechanism is intended to incentivize participants to invite more agents. However, it is hard to maintain all the desirable properties while giving the agents the freedom to choose. We prove that the property of incentive compatibility and Pareto optimal are incompatible when all the goods in the market are available for each agent.
\begin{table}[htbp]%
\centering
\begin{tabular}{|c|c|}
\hline
Agent $i$ & preference order $\succ_i$\\ 
\hline
1 & $g_4\succ_1 g_5\succ_1 g_1\succ_1 g_2\succ_1 g_3$ \\
\hline
2 &  $g_1\succ_2 g_2\succ_2 g_3\succ_2 g_4\succ_2 g_5$ \\
\hline
3 &  $g_4\succ_3 g_3\succ_3 g_1\succ_3 g_2\succ_3 g_5$ \\
\hline
4 & $g_3\succ_4 g_1\succ_4 g_4\succ_4 g_2\succ_4 g_5$ \\
\hline
5 &  $g_2\succ_5 g_5\succ_5 g_1\succ_5 g_3\succ_5 g_4$ \\
\hline
\end{tabular}
\caption{The preferences of agents in Figure 3.}
\end{table}
\begin{theorem}
\rm Given the barter exchange problem in a social network, no mechanism exists that is incentive compatible and outputs a Pareto optimal matching.
\end{theorem}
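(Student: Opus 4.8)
The plan is to exhibit a single instance --- the network of Figure~3 together with the preference profile of Table~1 --- and to argue by contradiction that no mechanism can be both incentive compatible and always return a Pareto optimal matching on it. So suppose $M$ is such a mechanism. The first step is to compute, for the truthful profile $\theta$ in which every agent reports $\mathcal{P}_i$ and invites all of $r_i$, the \emph{entire set} of Pareto optimal matchings. Since the generated graph only restricts who is in the market, once everyone invites, all five goods become available to everyone, and the Pareto optimal matchings are exactly those reachable by Top Trading Cycles over the $5\times 5$ preference table; a short enumeration of the directed cycles that can form in the TTC preference graph pins this family down to a handful of candidates. The feature designed into Table~1 is that in every one of these matchings there is an identified agent --- the ``victim'' of that matching --- who receives a good she ranks strictly below some good she could instead secure by a legal deviation.

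Second, for each Pareto optimal matching $\mu$ of the truthful profile I would write down a concrete deviation for its victim: either withholding an invitation to one of her descendants (possible precisely because the network is built so that some agents, e.g.\ those lying ``behind'' agent~$1$ or agent~$2$, can be kept out of the market), or a misreport of her preference function, or a combination of the two, always staying within the allowed action space $\theta_i'=(\mathcal{P}_i',r_i')$ with $r_i'\subseteq r_i$. The deviation yields a new action profile $\theta'$ and generated graph $G'$; for this deviated sub-instance I again enumerate all Pareto optimal matchings and show that \emph{every} one of them assigns the victim, evaluated under her \emph{true} preference, a strictly better good than $\mu$ did. Because $M$ must output a Pareto optimal matching both before and after the deviation, whatever matching $M$ selects on $\theta$, its victim strictly gains by switching to her deviation, contradicting incentive compatibility.

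The main obstacle --- and the reason the construction needs five agents rather than the three used in the earlier proposition --- is exactly that Pareto optimality does not single out one matching, so the argument has to be a \emph{covering} argument. The preferences in Table~1 must be chosen so that the (small but nonempty) family of Pareto optimal matchings of the base instance is completely covered by profitable deviations, and simultaneously so that each deviation's own family of Pareto optimal matchings is uniformly bad for the deviator. Making both directions hold at once is the delicate design step; verifying it amounts to a finite but careful case check over the possible top-trading cycles in the TTC preference graphs of the base instance and of each deviated sub-instance, plus a check that no deviation inadvertently opens up a matching that rescues the mechanism.
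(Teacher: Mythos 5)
Your skeleton is the logically correct standard: since a mechanism may select \emph{any} Pareto optimal matching, a contradiction with IC is forced only if, for whichever matching it selects on the truthful profile, some agent has a deviation after which \emph{every} Pareto optimal matching of the deviated instance is strictly better for her under her true preferences. The paper's proof works on the same instance but never meets this standard explicitly: it asserts there are exactly two Pareto optimal allocations and that the mechanism's reaction to a deviation is confined to two outcomes. Your plan, however, has two concrete problems. First, the Pareto optimal matchings of Table~1 are not ``those reachable by TTC'': with strict preferences TTC returns a single matching, and the Pareto optimal set (the serial-dictatorship outcomes) is much larger and contains allocations that ignore endowments, e.g.\ $1\mapsto g_4$, $2\mapsto g_1$, $3\mapsto g_3$, $4\mapsto g_2$, $5\mapsto g_5$. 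So the family you must cover is neither a handful of TTC outcomes nor the paper's two allocations.

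Second, and fatally for the plan as stated, the covering property fails on this very instance. Take the matching of Figure~3b, where the cycles $(1,5,2)$ and $(3,4)$ trade. Every agent except $1$ receives her top good, so agent $1$ is the only possible victim. But for every deviation available to her (withholding the invitation of $3$, misreporting $\mathcal{P}_1$, or both), the deviated instance still admits a Pareto optimal matching in which $2$ gets $g_1$, $5$ (if still present) gets $g_2$, $4$ keeps $g_4$ (or, if $3$ is present, $3$ and $4$ trade $g_3,g_4$), and $1$ gets $g_5$, $g_2$ or her own good: the other agents then hold their true top goods among the goods present, and $4$ strictly loses in any cycle with $1$, so no reallocation Pareto dominates this matching whatever agent $1$ reports. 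A mechanism may therefore answer the deviation with a Pareto optimal matching that gives agent $1$ no gain, and no IC violation is forced; your ``finite but careful case check'' would discover exactly this and the argument would not close. The obstacle is not repairable by a cleverer single-instance covering argument: a serial dictatorship whose priority is distance from $o$ in the generated graph (ancestors before descendants, ties by index) always outputs a Pareto optimal matching and, under the paper's definitions, is incentive compatible in both the reporting and the inviting dimension, because withholding invitations only removes goods and agents ranked after the deviator. So any valid impossibility argument must either add a hypothesis such as individual rationality (which the paper's two candidate allocations implicitly presuppose) or chain IC constraints across several instances, rather than exhibit one profitable deviation per Pareto optimal matching.
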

\begin{proof}
We prove Theorem 1 by an example given in Figure 3, where there exist 5 agents and their preference orders are shown in Table 1. The preference graph is also shown in Figure~\ref{3a}. For agent $1$, she can get a better good by being in the cycle involving agent $1,5,2$ or the cycle involving agent $1,4$. Similarly, agent $2$ and agent $5$ can only receive a better good by the trade cycle involving agent $1,5,2$. Agent $3$ hopes the cycle including $3$ and $4$ can trade. Agent $4$ can get the better exchange by two cycles, the cycle formed by agent $3,4$ and the cycle including $1,4$. There exist two Pareto optimal allocations which are shown in Figure~\ref{3b} and Figure~\ref{3c} respectively. This is because agent $1$ prefers the allocation in Figure~\ref{3c}, while agent $2$ prefers the allocation in Figure~\ref{3b}. We discuss these two allocations as follows.
\begin{itemize}
    \item Consider a mechanism that outputs the allocation as Figure~\ref{3b} shows. Note that agent $1$ prefers $g_4$ to $g_5$ and her descendant $3$ receives $g_4$ instead, since $3$ cannot be in the market without $1$'s invitation, then agent $1$ can deliberately not invite $3$. The corresponding graph is shown in Figure~\ref{4a}. If the cycle formed by agent $1,4$ can directly trade, then the mechanism does not satisfies the property of incentive compatibility since $1$ gets a better good by not inviting her neighbor $3$. However, if the mechanism does not allow this cycle to trade, then agent $1,2,3$ all leave with their own good, which is Pareto dominated by the allocation shown in Figure~\ref{4b} and hence, the matching that the mechanism outputs is not Pareto optimal.
    \item Consider another mechanism that outputs the allocation as Figure~\ref{3c} shows. If agent $2$ does not invite agent $4$ as shown in Figure~\ref{4c}, then the cycle involving agent $1,2,5$ can directly trade. Therefore, agent $2$ receives a better good by misreporting her type. Hence, this mechanism is not incentive compatible.
\end{itemize}
\end{proof}
\begin{figure}[htbp]%
	\centering
	\subfigure[]{
		\label{3a}
		\includegraphics[width=0.3\linewidth]{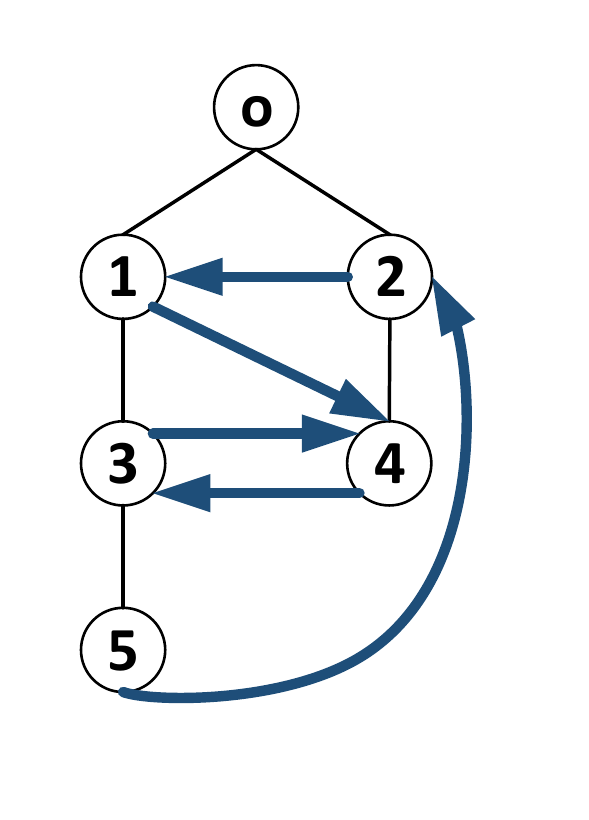}}
	\subfigure[]{%
		\label{3b}%
		\includegraphics[width=0.3\linewidth]{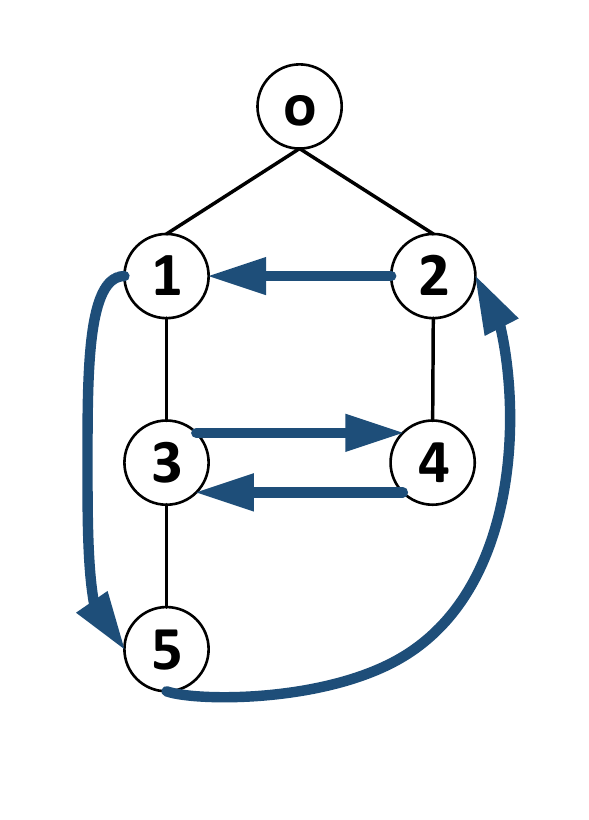}}
	\subfigure[]{%
		\label{3c}%
		\includegraphics[width=0.3\linewidth]{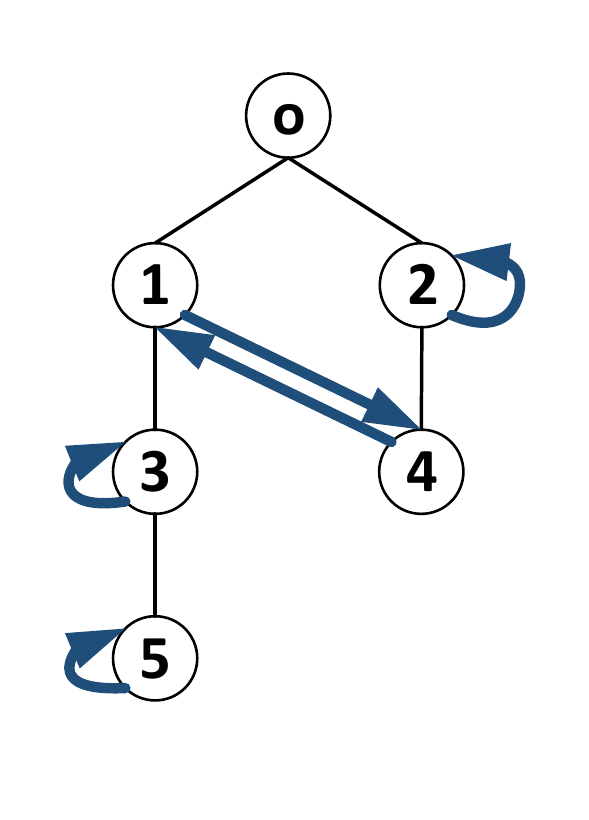}}\\
	\caption{The example to help prove theorem 1.}
\end{figure}
\begin{figure}[htbp]%
	\centering
	\subfigure[]{
		\label{4a}
		\includegraphics[width=0.3\linewidth]{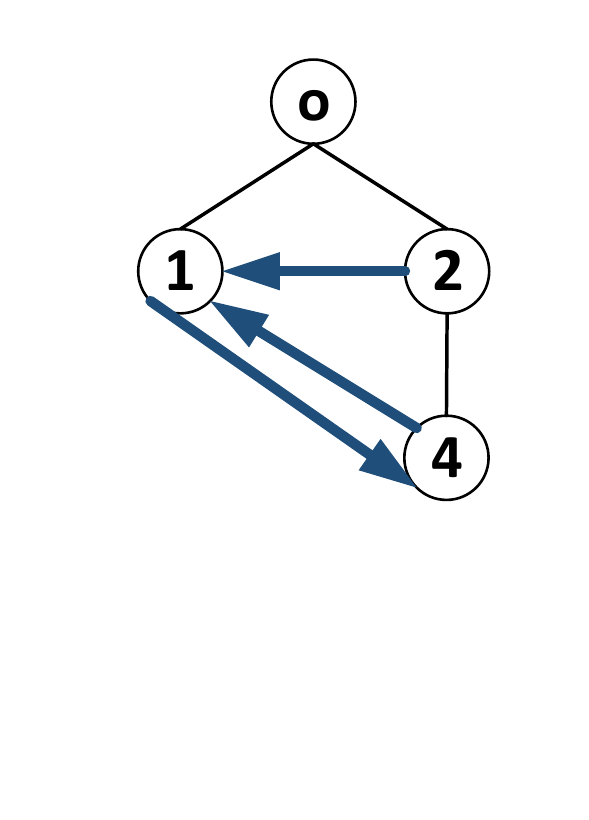}}
	\subfigure[]{%
		\label{4b}%
		\includegraphics[width=0.3\linewidth]{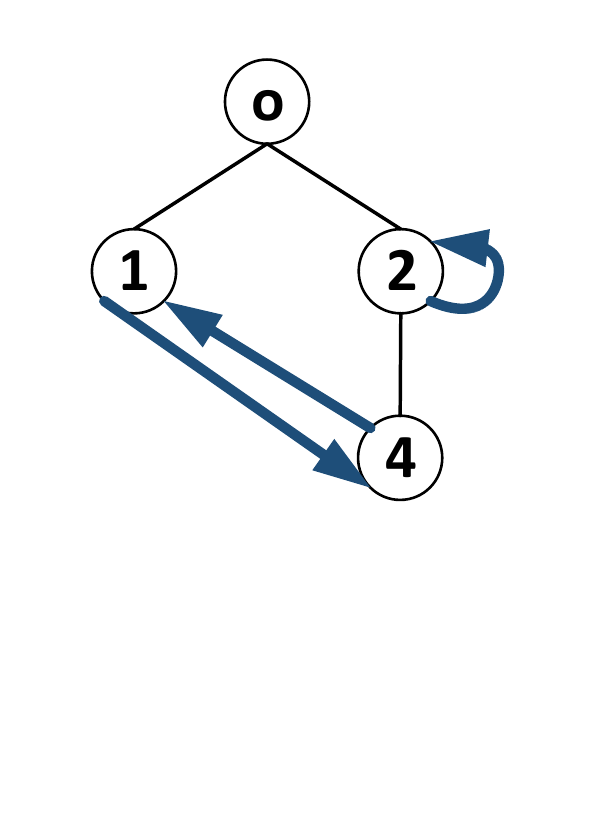}}
	\subfigure[]{%
		\label{4c}%
		\includegraphics[width=0.3\linewidth]{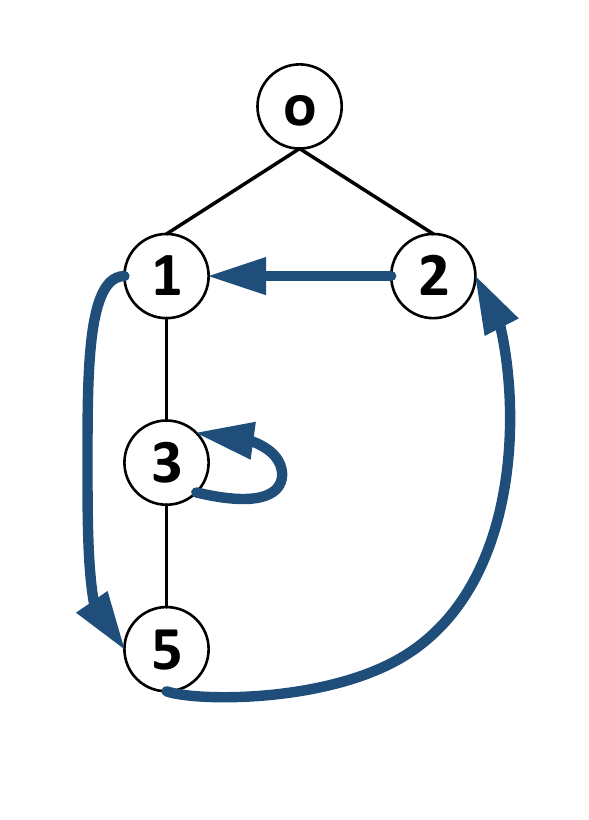}}\\
	\caption{The corresponding graph when agents misreport.}
\end{figure}
The example given in Figure 3 implies a problem that an agent will not invite her neighbors as long as inviting others results in a worse allocation for her. Intuitively, some agent $i$ has to pass some other agents to reach her favorite good in the network, which means $i$ has to ask permissions from these agents. Since some of these agents can determine whether $i$ is in the market or not, they have higher priority to get this good than $i$. In addition, for some cycle in the preference graph, every agent in the cycle has to ask permissions from some other agents outside the cycle with higher priorities. Note that the priority is inherent in the structure of the network and it also relates to agents' preferences. Hence, it is hard to determine whether these agents should agree or not.

Therefore, with the constraints of the network structure and the underlying priority conflicts, we restrict the set of goods that each agent can know about.       

\section{The TTC-Invitation (TTCI) Mechanism}
In this section, we will give our solution to the barter exchange problem in social networks. We begin with a starter mechanism, which will help each agent to get her favorite good by the social network.

\subsection{A Starter}
We have shown the problem when agents know about all the goods in the network. In fact, each agent initially knows the goods of her neighbors. Without the social network, each agent can only exchange with one of her neighbors and she might be refused if her neighbors do not like her good. However, by inviting neighbors, the agent can bring neighbors' neighbors to the trade, then agent might get her favorite good by a larger trade cycle. Inspired by this intuition, we describe a starter mechanism without giving participants any additional information, i.e., they only know the goods of their neighbors. 
\begin{Def}
\rm (\textit{available set}) Given a generated graph $G$, for each $i\in V\setminus \{o\}$, we define the \textit{available set} $\mathcal{G}_i=\{g_j\mid \forall j\in r_i\}$ as the set of goods that $i$ knows. 
\end{Def}

Now let us consider a starter mechanism, each agent reports their preference function. Then $\succ_i=\mathcal{P}_i'(\mathcal{G}_i)$. Similar to the procedure of the TTC algorithm, agents in any cycle of the preference graph can directly trade and will be removed from the set of unassigned agents. Repeat the process until all the agents are reassigned. 

In this mechanism, given the generated graph $G$, we can easily observe that agents who are in some cycle of the preference graph $G'$, they actually form a cycle in the generated graph. This is because $i$ can only point to $j$ if the edge $e(i,j)\in E$ exists. Therefore, the cycle will not be broken by any other agents outside the cycle, which means they can directly trade. With the intuition that such a simple mechanism guarantees the invitation part of the property of IC, it also inherits the desirable properties that the TTC algorithm satisfies since agents in a cycle can freely trade. 

It is noticeable that the starter mechanism helps agents get their neighbors' goods without sacrificing any properties. In the traditional setting, the participants are fixed, which means the agent cannot get her neighbor's good if she is not in the trade cycle with this good. However, in the starter mechanism, the agent can get the good by inviting neighbors in the mechanism so that they can form a larger trade cycle.     
\begin{figure}[htbp]%
	\centering
	\subfigure[]{
		\label{5a}
		\includegraphics[width=0.3\linewidth]{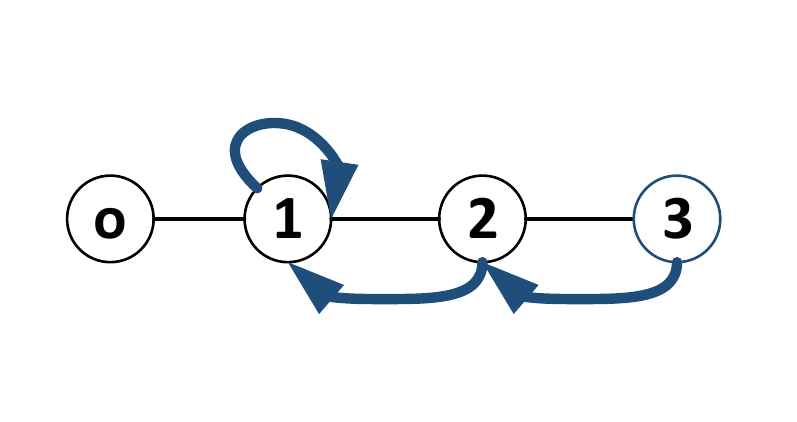}}
	\subfigure[]{%
		\label{5b}%
		\includegraphics[width=0.3\linewidth]{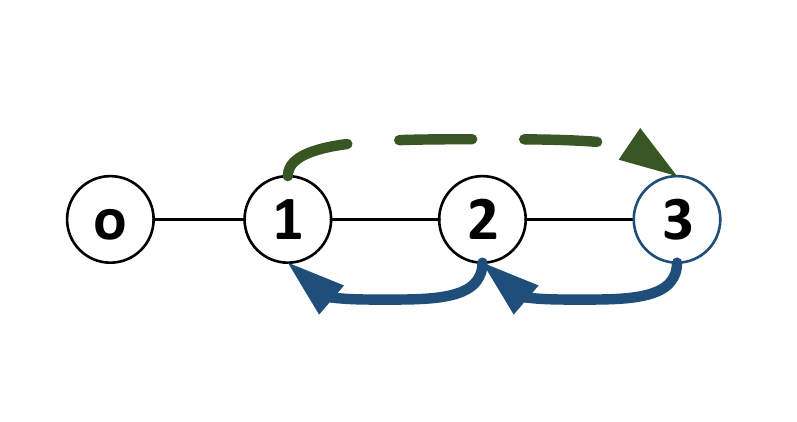}}
	\subfigure[]{%
		\label{5c}%
		\includegraphics[width=0.3\linewidth]{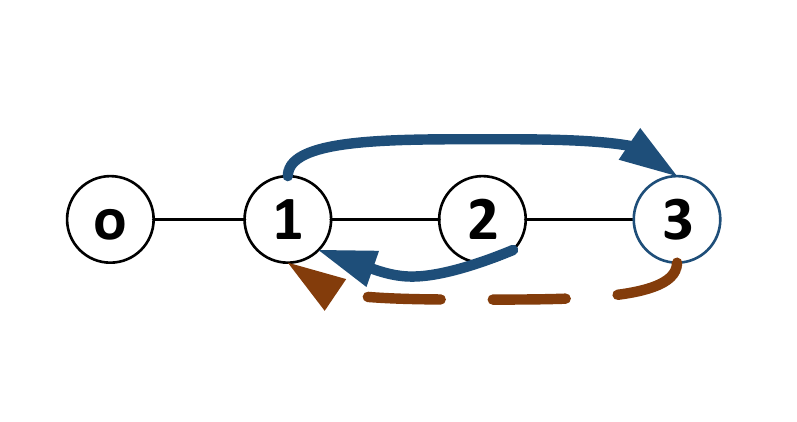}}\\
	\caption{The motivation of our mechanism.}
\end{figure}

Though the starter mechanism guarantees desirable properties, we can still extend the mechanism. Consider the case given in Figure 5, where the preference graph is shown in Figure~\ref{5a}. In Figure~\ref{5a}, agent $3$ points to $2$ but $3$ will prefer $g_1$ to $g_2$ if $3$ knows the existence of $g_1$. Likewise, agent $1$ prefers her own good than $g_2$. If $1$ knows the existence of $3$, $g_3$ will be $1$'s first choice. Inspired by this, we relax the restriction and inform agents about the goods of their descendants. In this case, the preference graph will be shown in Figure~\ref{5b}, where agent $1$ can point to agent $3$ (the directed edge in green). Then we can achieve a better allocation than the starter mechanism does. What if we further relax the restriction and let agents know the goods that their ancestors have as Figure~\ref{5c} shows? In fact, this relaxation will lead to more complex situations, which are similar to the conflicts when directly applying TTC in the network. The case in Figure~\ref{5c} is the same as the case in Figure~\ref{2a}, agent $2$ definitely disapproves of the trade between agent $1$ and $3$ unless she is involved in this cycle. However, if $2$ does not prefer $g_1$ and has already been allocated with her favorite good, the cycle involving agent $1,3$ can definitely trade. The challenge is that it is hard to determine whether $2$ can get her favorite house especially when agents are in a large network. 

Therefore, in our mechanism, each agent only knows the goods from her descendants and neighbors. With the help of the network, agents are more likely to receive goods that they want from their neighbors. The mechanism will inform each agent of their descendants' goods as the additional information given to agents. The additional information can be viewed as bonuses for agents for they propagate the market. 

\subsection{TTC-Invitation (TTCI) Mechanism}
We are now ready to introduce our mechanism. Inspired by the classic TTC algorithm, the mechanism lets agents point to their most favorite remaining and available goods. However, in contrast to TTC, not all the goods are available for agents to pick due to the structural constraint of the network as Figure 2 shows. Intuitively, agent $3$ is not supposed to point to $1$ as she cannot reach $1$ without $2$'s participation. The basic idea of our mechanism is to determine the available set for each agent such that the agent will not be excluded from the trade cycle, which involves both her ancestors and descendants. Here, the available set for each agent is different than the set in the starter mechanism.
\begin{Def}
\rm (\textit{Available Set}) Given a generated graph $G=(V,E)$, for each $i\in V\setminus \{o\}$, we define the available set $\mathcal{G}_i=\{g_j\mid \forall j\in D_i\cup r_i\}$ as the set of goods that $i$ knows. $i$'s true preference order will be $\succ_i=\mathcal{P}_i(\mathcal{G}_i)$. 
\end{Def}
The details of TTCI are given as follows.
\begin{framed}
	\textbf{TTC-Invitation (TTCI) Mechanism}\\
	\noindent\rule{\textwidth}{0.35mm}
	Given the reported type profile $\theta'$, generate the graph $G=(V,E)$. 
    \begin{enumerate}
    \item Initially set $N'= V\setminus \{o\}$.
    \item While $N'\neq \emptyset$:
    \begin{enumerate}
        \item For each $i\in N'$, set $\mathcal{G}_i=\{g_j \mid j\in D_i\cup r_i\}$. The preference order for each $i$ is $\succ_i=\mathcal{P}_i'(\mathcal{G}_i)$.
        \item Construct a preference graph $G'=(V',E')$, where $V'=N'$ and for each $i,j\in N'$, the direct edge $e'(i,j)\in E'$ if and only if $g_j\succ_i g_k$ for all other $g_k\in \mathcal{G}_i\cap N'$. i.e., Each agent $i\in N'$ points to her most favorite remaining and available good (including $g_i$).
        \item Find an arbitrary directed cycle $c=(V_c,E_c)$ in $G'$. Let agents in the cycle trade directly. i.e., for every directed edge $e'(i,j)\in E_c$, set $\mu_{G,\theta'}(i)=g_j$. 
        \item Remove all reassigned agents in the cycle from $N'$: $N'\leftarrow N'\setminus V_c$.
    \end{enumerate}
\end{enumerate}
\end{framed}
Consider the social network given in Figure 6 with seven participants and their preferences are shown in Table 2, the TTCI mechanism runs as follows.
\begin{itemize}
    \item In the first round as shown in Figure~\ref{6a}, $N'=V\setminus \{o\}=\{1,2,3,4,5,6,7\}$.
    \begin{itemize}
        \item Each $i\in N'$ points to her favorite remaining good from $\mathcal{G}_i$.
        \item Find a cycle formed by agent $3,7,5$ and let them trade directly, i.e., agent $3,7,5$ get goods $g_7,g_5,g_3$ respectively.
        \item $N'\leftarrow N'\setminus \{3,5,7\}$.
    \end{itemize}
    \item In the second round (Figure~\ref{6b}), $N'=\{1,2,4,6\}$, and it finds a loop involving $6$. Then $6$ opts out the mechanism with her own good. Then set $N'\leftarrow N'\setminus \{6\}$.
    \item The third round is shown in Figure~\ref{6c}), $N'=\{1,2,4\}$, and it finds a cycle involving $1,4$. Then $1,4$ can directly exchange their goods. Then set $N'\leftarrow N'\setminus \{1,4\}$.
    \item In the last round, as is shown in Figure~\ref{6d}, $N'=\{2\}$. Agent $2$ has no other choice but to leave with her own good.
\end{itemize}
\begin{table}[!ht]
\scalebox{0.7}{
\begin{tabular}{|c|c|c|}
\hline
Agent $i$ & Available Set $\mathcal{G}_i$ & preference order $\succ_i$\\
\hline
1 & $\{g_1,g_2,g_3,g_4,g_5,g_6,g_7\}$ & $g_7\succ_1 g_6\succ_1 g_4\succ_1 g_2\succ_1 g_5\succ_1 g_1\succ_1 g_3$ \\
\hline
2 & $\{g_1,g_2,g_3\}$ & $g_3\succ_2 g_1\succ_2 g_2$ \\
\hline
3 & $\{g_1,g_2,g_3,g_4,g_5,g_6,g_7\}$ & $g_7\succ_3 g_3\succ_3 g_6\succ_3 g_5\succ_3 g_4\succ_3 g_2\succ_3 g_1$ \\
\hline
4 & $\{g_1,g_3,g_4\}$ & $g_3\succ_4 g_1\succ_4 g_4$ \\
\hline
5 & $\{g_3,g_5,g_7\}$ & $g_3\succ_5 g_7\succ_5 g_5$ \\
\hline
6 & $\{g_3,g_6,g_7\}$ & $g_7\succ_6 g_6\succ_6 g_3$ \\
\hline
7 & $\{g_5,g_6,g_7\}$ & $g_5\succ_7 g_6\succ_7 g_7$ \\
\hline
\end{tabular}}
\caption{The preferences of agents in Figure 3.}
\end{table}
\begin{figure}[htbp]%
	\centering
	\subfigure[]{
		\label{6a}
		\includegraphics[width=0.24\linewidth]{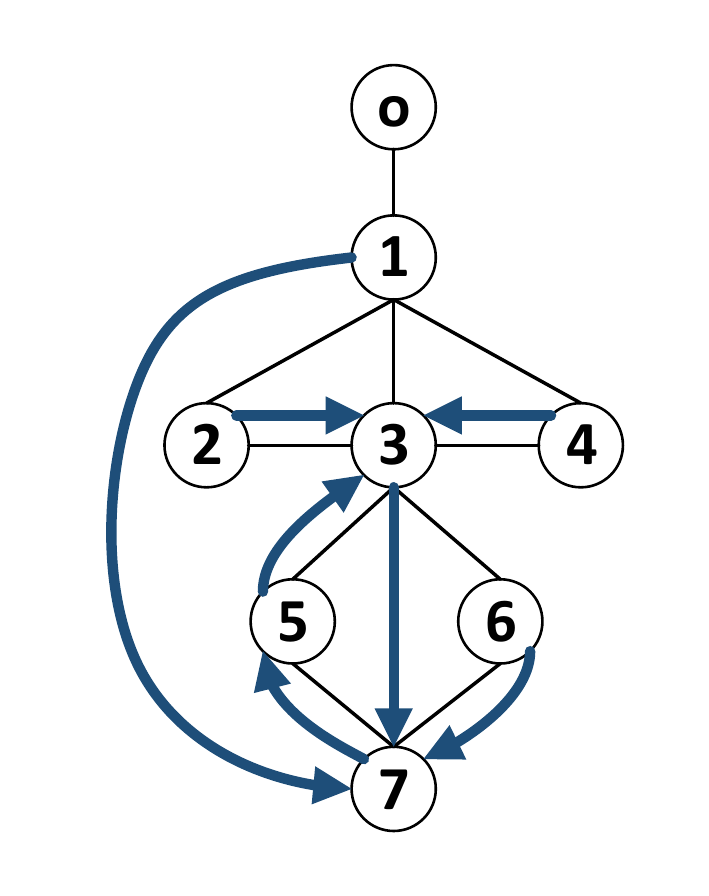}}
	\subfigure[]{%
		\label{6b}%
		\includegraphics[width=0.24\linewidth]{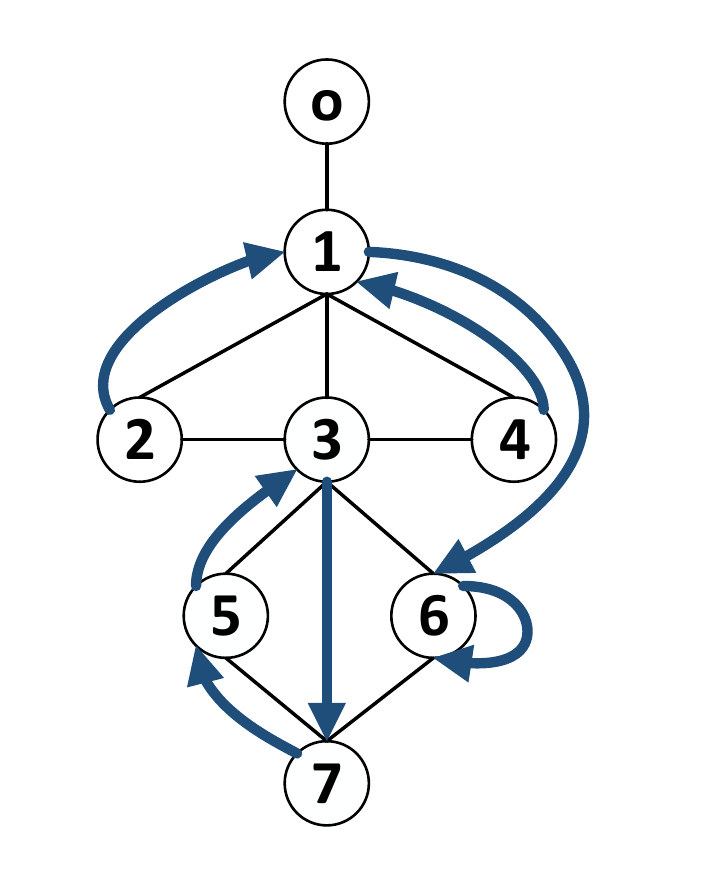}}
	\subfigure[]{%
		\label{6c}%
		\includegraphics[width=0.24\linewidth]{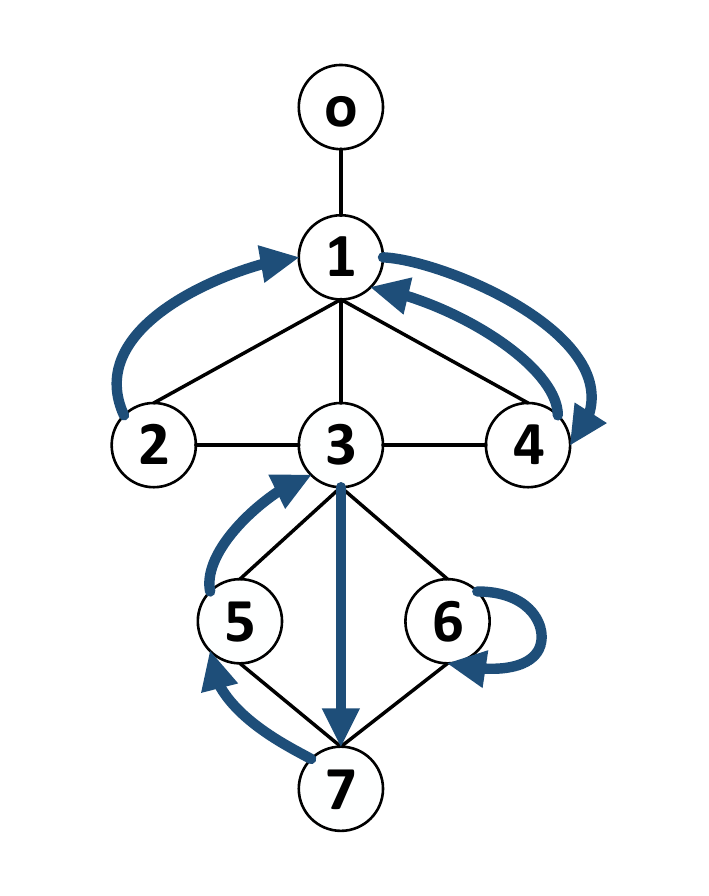}}
	\subfigure[]{%
	\label{6d}%
	\includegraphics[width=0.24\linewidth]{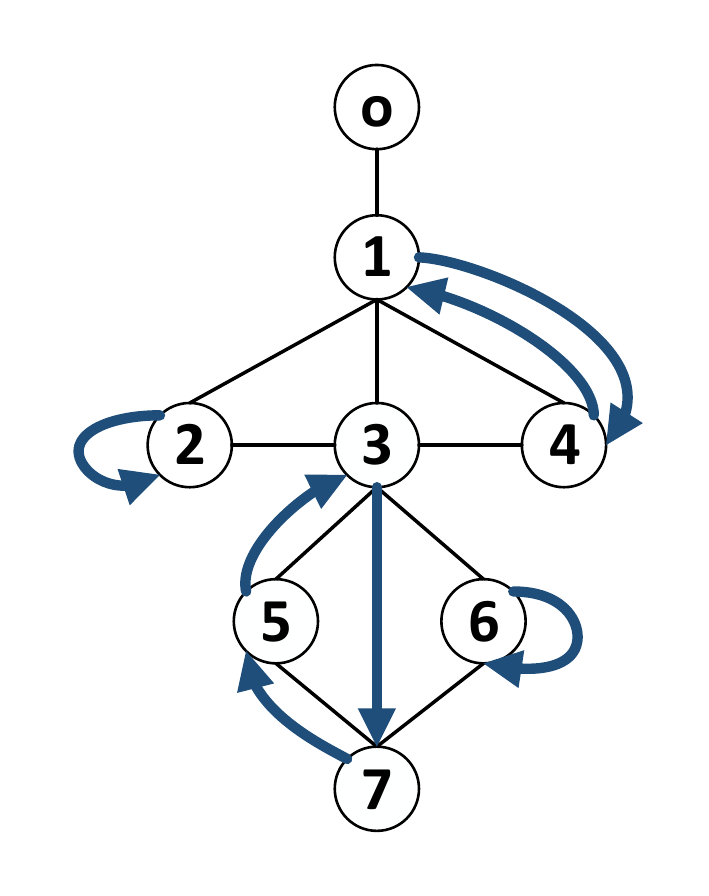}}\\
	\caption{An running example of TTCI.}
\end{figure}
Note that when the generated graph is a completed graph, that is each pair of nodes in the graph is connected by an edge, our mechanism is equivalent to the traditional TTC algorithm. This is because for each agent $i\in V\setminus \{o\}$, $i$ can freely reach any other agent in $V\setminus \{o\}$, i.e., $\mathcal{G}_i=\{g_j\mid \forall j\in V\setminus \{o\}\}$.

\section{Properties of TTCI} 
In this section, we prove that our TTC-Invitation mechanism is individually rational and incentive compatible. our mechanism can attract agents to participate and further enlarge the scale of the market.

\begin{theorem}
\rm The TTC-Invitation (TTCI) mechanism is \textit{individually rational}.
\end{theorem}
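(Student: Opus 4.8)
The plan is to fix an agent $i\in N$, an arbitrary reported profile $\theta'$ with $\theta_i'=(\mathcal{P}_i,r_i')$, and the associated local instance $(G_i,\theta'')$ from the definition of IR, and to compare the good $i$ receives in the run of TTCI on $G$ with the good she receives in the run on $G_i$. Two elementary facts will be used throughout. First, the outcome of a top-trading-cycle procedure does not depend on which cycle is removed at each step (TTCI is such a procedure on the restricted preference digraph), so one is free to fix a convenient order of cycle removals in each run. Second, whenever TTCI deletes an agent it assigns her exactly the good she is pointing at, i.e. her most preferred still-unassigned good inside her available set at that step; since $g_i\in\mathcal{G}_i$, this is always at least $g_i$.

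The first thing I would establish is a structural ``sealed region'' property. In the run on $G$, for every $j\in D_i$ with $j\neq i$ we have $\mathcal{G}_j=\{g_k\mid k\in D_j\cup r_j\}\subseteq\{g_k\mid k\in D_i\}$, since $D_j\subseteq D_i$ and, by the cut-vertex characterization of descendants, every neighbor of a strict descendant of $i$ is again a descendant of $i$. Hence, starting from any agent of $D_i\setminus\{i\}$ and following pointers along a trading cycle, one stays inside $D_i$ until $i$ is reached (if ever); so every cycle that TTCI removes \emph{before} the round in which $i$ itself is removed is contained either in $D_i\setminus\{i\}$ or in $V\setminus D_i$. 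Consequently, before that round the goods $\{g_k\mid k\in D_i\}$ that are reallocated go only to agents of $D_i\setminus\{i\}$, and — since each $j\in D_i\setminus\{i\}$ has the same neighborhood, the same descendant set and hence the same available set in $G$ as in $G_i$ — the sub-dynamics restricted to $D_i\setminus\{i\}$ is \emph{literally the same process} in the run on $G$ and in the run on $G_i$ (agent $i$'s own pointer is irrelevant to it, as $i$ is not on those cycles).

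Next I would fix, in both runs, an order of cycle removals that first exhausts all trading cycles lying inside $D_i\setminus\{i\}$; by order independence this changes neither outcome, and by the previous paragraph both runs delete exactly the same set $R\subseteq D_i\setminus\{i\}$ of descendants during this phase and leave the same goods $\{g_k\mid k\in D_i\setminus R\}$ unassigned. In the run on $G_i$ only the agents $D_i\setminus R$ (among them $i$) remain, all with available sets inside $\{g_k\mid k\in D_i\}$; since a top-trading cycle always exists, $i$ is now forced into one and is assigned $g^{*}:=\mu_{G_i,\theta''}(i)$, her most preferred good in $\{g_k\mid k\in D_i\setminus R\}$. In the run on $G$, after this phase only cycles disjoint from $D_i$ or cycles through $i$ can still form; the former touch neither the goods $\{g_k\mid k\in D_i\setminus R\}$ nor the agents of $D_i\setminus\{i\}$ (and so create no new internal cycle), while the first cycle through $i$ deletes $i$. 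Therefore, at the round $i$ is removed in the run on $G$, every good of $\{g_k\mid k\in D_i\setminus R\}$, in particular $g^{*}$, is still unassigned, and it lies in $i$'s available set $\mathcal{G}_i$ in $G$, which moreover contains $i$'s neighbors' goods and is thus a superset of her available set in $G_i$. Since $i$ is assigned her most preferred unassigned good from $\mathcal{G}_i$, we obtain $\mu_{G,\theta'}(i)\succeq_i g^{*}=\mu_{G_i,\theta''}(i)$, which is exactly individual rationality.

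The step I expect to be the main obstacle is making the ``sealed region'' argument watertight: verifying the cut-vertex claim that a neighbor of a strict descendant of $i$ is a descendant of $i$ (and, relatedly, that $D_j$ and $r_j$, hence $\mathcal{G}_j$, are unchanged when one passes from $G$ to $G_i$ for $j\in D_i\setminus\{i\}$, which requires checking that every $o$-to-$k$ path for $k\in D_i$ decomposes into an $o$-to-$i$ part followed by an $i$-to-$k$ part inside $D_i$), and checking that one may legally postpone every cycle through $i$ until all cycles internal to $D_i\setminus\{i\}$ are exhausted — this is what makes the comparison of the two runs at the single ``pivot'' round legitimate. Once these are secured the inequality is immediate, since on the full network $i$'s set of options only grows while the constraints imposed by her descendants are precisely the ones she faces in the isolated market $G_i$. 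A minor point to keep track of is the deliberately nondeterministic ``find an arbitrary directed cycle'' in TTCI, which is exactly why order independence of the top-trading-cycle outcome must be invoked.
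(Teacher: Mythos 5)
Your proposal is correct and rests on essentially the same idea as the paper's own proof: every strict descendant of $i$ has an available set confined to $\{g_k\mid k\in D_i\}$ and unchanged by $i$'s participation, so whatever $i$ obtains in the isolated market $G_i$ remains attainable to her in the run on $G$, where her option set only grows. Your write-up is in fact more rigorous than the paper's, which informally asserts that $i$'s local trading cycle re-forms, whereas you justify the coupling of the two runs via the sealed-region property and order-independence of cycle removal.
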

We show that by joining the market and honestly reporting the preference function, each agent will get a good that is no worse than the one she can get from her subgraph $G_i$. Here we assume that if the agent refuses to join in the market and she can trade in her subgraph, the mechanism will know this subgraph. Under this assumption, we prove that each agent will not get a worse good by bringing this subgraph to the market. 
\begin{proof}
\rm We assume that agent $i$ reports her true preference function $\mathcal{P}_i$. Suppose $i$ trades in the preference cycle $c_i=(V_{c_i},E_{c_i})$ in $G_i$ and receives $g_j$, i.e., she does not participate the market.

If $i$ joins in the market, she can reach some neighbors who have already been in the market. She probably finds some available goods that are better than $g_j$, then she can point to that good first and might get one in the end. Since for $j$ and any other descendants of $i$, their available sets remain the same after $i$ participating in the market. That is each agent $k\in V_{c_i}\setminus \{i\}$ remains unchanged until $i$ trades in another cycle. In this case, $i$ gets a better good than $g_j$. Even though $i$ might still prefer $g_j$ among all the available goods, or unluckily she fails to get any better good and points back to $g_j$. As mentioned above, all the directed edges in $E_{c_i}$ except $e'(i,j)$ still exist. Thus, once $i$ points to $j$, i.e., the directed edge $e'(i,j)$ exists, the cycle $c_i$ will be formed again and $i$ still gets $g_j$ in the end. 

Therefore, agent $i$ either gets a good better than $g_j$ or still gets $g_j$ when she participates in the market, i.e., TTCI is IR.
\end{proof}

\begin{theorem}
\rm The TTC-Invitation (TTCI) mechanism is \textit{incentive compatible}.
\end{theorem}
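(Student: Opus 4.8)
The plan is to split an arbitrary deviation of agent $i$ into a preference misreport and an invitation misreport and to neutralise the two pieces in turn. Write $i$'s honest report as $\theta_i=(\mathcal{P}_i,r_i)$ and a deviation as $\theta_i'=(\mathcal{P}_i',r_i')$ with $r_i'\subseteq r_i$. Since the generated graph depends only on the reported neighbour sets, the graph produced by $((\mathcal{P}_i',r_i'),\theta_{-i}')$ equals the one produced by $((\mathcal{P}_i,r_i'),\theta_{-i}')$; call it $G'$, and let $G$ be the honest graph. It is then enough to prove (a) \emph{preference-proofness at a fixed invitation}: $\mu_{G',((\mathcal{P}_i,r_i'),\theta_{-i}')}(i)\succeq_i\mu_{G',((\mathcal{P}_i',r_i'),\theta_{-i}')}(i)$; and (b) \emph{invitation-proofness under honest preferences}: $\mu_{G,((\mathcal{P}_i,r_i),\theta_{-i}')}(i)\succeq_i\mu_{G',((\mathcal{P}_i,r_i'),\theta_{-i}')}(i)$. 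Chaining (b) with (a) applied at the invitation $r_i'$ gives the IC inequality $\mu_{G,\theta'}(i)\succeq_i\mu_{G',\theta''}(i)$.

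Claim (a) I would reduce to the classical strategyproofness of the housing-market TTC. Fixing the invitation freezes the generated graph and hence every agent's available set $\mathcal{G}_k$, so TTCI coincides with plain TTC run on the ``clipped'' profile in which each agent $k$ ranks the goods of $\mathcal{G}_k$ at the top in her reported order and the remaining goods at the bottom in an arbitrary fixed order. This is well posed because $g_k\in\mathcal{G}_k$, so no agent is ever assigned a good outside her available set, and therefore $i$'s true utility over the goods she can possibly receive is represented by her clipped true preference. A change of $\mathcal{P}_i'$ is exactly a change of $i$'s clipped report, so the strategyproofness of TTC gives that honest reporting weakly dominates for $i$.

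For (b), not inviting some neighbours deletes from the market precisely the set $S$ of nodes that become unreachable from $o$, and every such node lies below $i$, i.e.\ $S\subseteq D_i$. Peeling $S$ off bottom-up (deepest descendants first) writes $G=G_0\supseteq\cdots\supseteq G_t=G'$ where each step removes a single node that is then a leaf and still a descendant of $i$, so it suffices to prove the one-leaf statement: if $m$ is a leaf of a graph $H$ with $m\in D_i^H$, then $\mu_H(i)\succeq_i\mu_{H\setminus\{m\}}(i)$. My approach is to run TTCI on $H$ and on $H\setminus\{m\}$ in a coupled fashion --- the $H$-run executing the cycles of the $H\setminus\{m\}$-run interleaved with the cycle(s) that involve $m$ --- and to track the agents of the cycle $i\to v_2\to\cdots\to v_p\to i$ in which $i$ is matched in $H\setminus\{m\}$: if I can show the edges $v_2\to v_3,\dots,v_p\to i$ are all live in the $H$-run while $i$ is still unmatched there, then in $H$ either $i$ is already matched to a good she weakly prefers, or she eventually points to $v_2$ and the same cycle re-forms, so $\mu_H(i)\succeq_i g_{v_2}=\mu_{H\setminus\{m\}}(i)$.

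The hard part is the last claim: adding $m$ enlarges the available set of every ancestor of $m$ --- which can include agents on $i$'s cycle, and $i$ herself --- so the $H$-run is not merely the $H\setminus\{m\}$-run with one extra cycle spliced in, and enlarging a top-trading-cycles market does not, in general, help a fixed agent. The argument must exploit that $m$ is a leaf and that $i$ dominates $m$, so as to show that any cycle the $H$-run forms through $g_m$ either also forms in the $H\setminus\{m\}$-run once $g_m$ is removed from the relevant agents' menus, or contains $i$ and gives her a good at least as good as $g_{v_2}$. I would try to push this through with the same reasoning used in the individual-rationality proof --- an edge of a preference graph whose head is not $i$ cannot disappear until that head is matched --- lifted to the coupled executions; verifying this lift, together with checking that the bottom-up peeling keeps the one-leaf hypotheses valid at every step, is where the real work lies.
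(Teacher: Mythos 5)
Your decomposition into (a) preference-proofness at a fixed invitation and (b) invitation-proofness under honest preferences, chained together, is exactly the paper's split into ``Reporting IC'' and ``Inviting IC'', and your part (a) is sound and arguably cleaner than the paper's own argument: once the invitation is fixed the generated graph and hence every available set $\mathcal{G}_k$ is frozen, and since $g_k\in\mathcal{G}_k$ no agent ever points outside her menu, so TTCI coincides with classical TTC on the clipped profiles and classical strategyproofness covers any misreport of $\mathcal{P}_i$. The genuine gap is part (b). The ``one-leaf'' lemma --- that restoring a single descendant $m$ of $i$ weakly helps $i$ --- is the entire non-trivial content of invitation-proofness, and you do not prove it: you describe the coupling you would attempt, correctly observe that it cannot follow from any generic monotonicity of TTC (adding $g_m$ enlarges the menus of $m$'s ancestors, possibly including agents on the very cycle that serves $i$ in $H\setminus\{m\}$, so that cycle may be diverted before $i$ closes it), and then explicitly defer the verification as ``where the real work lies''. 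Reducing the theorem to this lemma is not by itself progress, because the lemma for a single node is essentially the original claim.

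What is needed to close it --- and what the paper's Inviting IC case analysis supplies, albeit informally --- is a structural fact about TTCI's pointing rule: an agent only ever points to a good of a descendant or a neighbor, and in the generated graph every neighbor of a node in $D_i\setminus\{i\}$ again lies in $D_i$ (otherwise that node would be reachable from $o$ avoiding $i$). Hence any preference-graph cycle through $g_m$ that uses an agent outside $D_i$ must pass through $i$ herself, in which case $i$ is matched to a good she currently points to, i.e.\ weakly better than the one she is guaranteed; and a cycle through $g_m$ lying entirely inside $D_i$ can be compared across the two runs by the persistence argument you borrow from the IR proof (a preference-graph edge whose head is unmatched cannot disappear), because such a cycle's members are present and point at each other independently of whether $m$ is in the market. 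The paper phrases this as: any competing cycle in the subtree of an invited neighbor corresponds to a cycle in $G$ whose members invite one another regardless of $i$'s choice, while withholding invitations only shrinks $i$'s own available set $\mathcal{G}_i=\{g_j\mid j\in D_i\cup r_i\}$. Until your coupling argument invokes and verifies facts of this kind, part (b) --- and with it the theorem --- remains unproven in your proposal.
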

To prove our mechanism is IC, we need to show that given the agents' type profile $\theta\in \Theta$ and their reported type profile $\theta'\in \Theta$ and the generated graph $G=(V,E)$, for each agent $i\in V\setminus \{o\}$ such that $\theta_i'\neq nil$:  
\begin{itemize}
    \item Inviting IC: fix $i$'s preference function report to be $\mathcal{P}_i$, inviting all $i$'s neighbours has no negative effect on the allocation to $i$. i.e., $i$ will not get a worse good.
    \item Reporting IC: fix $i$'s invitation to be $r_i'$, $i$ weakly prefers the allocation when she reports her true preference function $\mathcal{P}_i$, over the allocation when $i$ chooses some other preference function.
\end{itemize}
We first show that TTCI satisfies the invitation part of IC.
\begin{proof}
\rm For each $i\neq j\in V\setminus \{o\}$, $j\in r_i\cap D_i$, i.e., agent $i$ can directly contact $j$ and $i$ is $j$'s ancestor. We construct a preference graph $G'$ on $G$ basing on agents' preferences. In the following paragraphs, we will discuss whether $i$ is willing to invite $j$ under two cases. 
\begin{enumerate}
    \item $i$'s favorite good is occupied by the agent in $D_j$. There is no doubt that $i$ invites $j$, otherwise she has no means to get her favorite good.
    \item $i$'s favorite good $g_k\in \mathcal{G}_i$ is in the cycle $c_j=(V_{c_j},E_{c_j})$ which includes $j$. We prove that in this case $i$ cannot be allocated with $g_k$. That is the cycle $c_j$ is irrelevant with $i$'s invitation. There are two possible situations.
    \begin{enumerate}
        \item Both agent $i$ and $j$ prefer good $g_k$.
        \begin{enumerate}
            \item When $j\in V_{c_j}$ and $j$ points to $k$, if $k\in D_i\cap D_j$, then $j$ can reach $k$ without $i$'s invitation. Next we show that $V_{c_j}\subseteq D_j$. It is impossible that the cycle $c_j$ involves agents in $D_k\setminus \{k\}$. Because if so, there should be an agent $x$ in $D_k$ who points to another agent $y$ in $D_j\setminus D_k$. $x$ is not supposed to point to $y$ unless $e(x,y)\in E$. Since there exists a simple path from $y$ to $j$ and to $o$, then $x$ can reach the organizer $\{o\}$ by passing $y$ to the path from $y$ to $o$. That means $x$ can reach $\{o\}$ without passing $k$, thus $k$ is not the ancestor of $x$, a contradiction. Therefore, $e(x,y)\notin E$. Similarly, all agents in $c_j$ belong to $D_j$, which means the existence of $c_j$ is irrelevant with $i$.  
            \item When $k\in D_i$ and $k\notin D_j$, $j$ can point to $k$ indicates that $k\in r_j$. Similar to the previous case, $k$ will not point to her descendant, otherwise there must exist an edge from this descendant to $j$ then to $o$ without passing $k$, which contradicts the assumption. Then each agent in $V_{c_j}$ is pointing to her neighbor and hence the agents in $V_{c_j}$ also form a cycle in $G$. Since $i$ prefers good $g_k$, she will invite $k$. Then every agent in $V_{c_j}$ will invite their neighbors. That is even if $i$ does not invite $j$, agents in $V_{c_j}$ will still be in the market by their neighbors' invitations.
        \end{enumerate}
        \item $j$ does not point to $k$. Similar to the previous proof, the agents in $V_{c_j}$ can form a cycle in $G$. As long as $i$ invites $k$, the cycle $c_j$ will exist. 
    \end{enumerate}
\end{enumerate}
\end{proof}
Next, we show that agents cannot improve their results by misreporting their preference functions $\mathcal{P}_i'$ when their neighbors $r_i'$ are determined. To prove the above, we consider whether the agent can improve her result by pointing to a less preferred good.
\begin{proof}
\rm  We first show that every cycle in $G'$ can directly trade. As proved in the invitation part, when agents who are in a cycle of $G'$ also form a cycle in $G$, then the cycle cannot be broken. On the other hand, if agents in a cycle of $G'$ cannot form a cycle in $G$, then they can reach each other without passing by any other agents outside the cycle. That is one of the agents is in a cycle involving some of her descendant. As the proof in the invitation part has proved, the cycle will not be broken by any other agents outside the cycle, so agents in this cycle can directly change. 

Secondly, we prove that no agent can improve her results by misreporting her preference function. Suppose agent $i$ can get a good $g_j$ at round $t$ by truthful reporting her preference function. Suppose there are $m$ paths either from $i$'s descendant to $i$ or from one of neighbors of $i$ to $i$. That is there are $m$ ``choices" for $i$, and $i$'s allocation is the best among them. At round $t'\ge t$, as long as $i$ is not reassigned, these $m$ paths will remain the same and there might be more choices for $i$, since there are some agents whose favorite goods were traded and they form other paths to $i$. Note that $i$ points to her most favorite remaining good at each round, and she changes only when this good was traded. Thus, the only way that $i$ misreports is that $i$ points to a less preferred good. From this angle, fix other agents' preferences, by misreporting the preference function, if $i$ gets allocated at round $t''\le t$, the number of choices for $i$ might decrease, and the choices at round $t''$ are included within the $m$ choices at round $t$. Therefore, it will not improve the result if $i$ gets allocated before round $t$. On the other side, if $i$ get allocated at round $t'\ge t$, the good $g_k$ will not be better than $g_j$, otherwise $i$ will points to $g_k$ before pointing to $g_j$. Note that it is because all the goods that are better than $g_j$ were traded that $i$ points to $g_j$. Therefore, $i$'s allocation under truthful preference order is the best she can get.
\end{proof}

\begin{theorem}
\rm The TTC-Invitation (TTCI) mechanism outputs a \textit{Pareto optimal} matching $\mu$.
\end{theorem}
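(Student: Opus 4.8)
The plan is to adapt the classical round-by-round (serial-dictatorship-flavoured) argument for Pareto optimality of TTC, adjusted to the restricted-information setting. The crucial point is that Pareto optimality here is judged with respect to the restricted orders $\succ_i=\mathcal{P}_i(\mathcal{G}_i)$, so any matching $\nu$ that Pareto dominates $\mu$ must satisfy $\nu(i)\succeq_i\mu(i)$ for every $i\in V\setminus\{o\}$, and since $\succeq_i$ lives only on $\mathcal{G}_i$ this already forces $\nu(i)\in\mathcal{G}_i$. Hence it is enough to show that any matching $\nu$ with $\nu(i)\in\mathcal{G}_i$ and $\nu(i)\succeq_i\mu(i)$ for all $i$ must in fact coincide with $\mu$; this rules out a strict improvement for anybody, which is exactly Pareto optimality.

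Before the induction I would record two elementary facts. \textbf{(i)} When the mechanism selects a cycle $c=(V_c,E_c)$ in step~(c), the goods that leave the pool are exactly $\{g_k\mid k\in V_c\}$, because every agent in the cycle gives away precisely her own good; consequently, after any round the set $N'$ of unassigned agents and the set of unassigned goods $\{g_k\mid k\in N'\}$ are in bijection. \textbf{(ii)} Since $i\in D_i$, we always have $g_i\in\mathcal{G}_i$, and $g_i$ stays available as long as $i$ is unassigned, so every vertex of the preference graph $G'$ has out-degree at least one and a directed cycle always exists; together with the structural analysis already used in the proof of incentive compatibility (a cycle of $G'$ either induces a cycle in $G$ or sits entirely inside a descendant set, and in either case cannot be broken by outsiders), this shows the mechanism is well defined, terminates, and hence produces a genuine matching $\mu$.

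The heart of the argument is an induction on the round index $t$. Assume $\nu(j)=\mu(j)$ for every agent $j$ assigned in rounds $1,\dots,t-1$. Because $\nu$ is a bijection that agrees with $\mu$ on those agents, it maps the agents still unassigned at the start of round $t$, say $N'_t$, bijectively onto the still-unassigned goods $\{g_k\mid k\in N'_t\}$ (using fact (i)); in particular, for each agent $i$ assigned during round $t$ we get $\nu(i)\in\mathcal{G}_i\cap\{g_k\mid k\in N'_t\}$. But by the edge rule in step~(b) of the mechanism, $\mu(i)$ is exactly the $\succ_i$-maximal element of $\mathcal{G}_i\cap\{g_k\mid k\in N'_t\}$, and $\nu(i)$ lies in that same set with $\nu(i)\succeq_i\mu(i)$; hence $\nu(i)=\mu(i)$, extending the hypothesis to round $t$ (the case $t=1$ is the base case, with the hypothesis vacuous and $N'_1=V\setminus\{o\}$). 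Since the rounds exhaust $V\setminus\{o\}$, we conclude $\nu=\mu$, so $\mu$ is Pareto optimal.

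I expect the main obstacle to be definitional rather than combinatorial: making precise what ``$\nu$ Pareto dominates $\mu$'' should mean when $\nu$ could hand an agent a good outside her available set, and arguing that the appropriate notion of Pareto optimality here is the restricted one — which is also consistent with Theorem~1, since the unrestricted notion is provably incompatible with incentive compatibility. Once that is settled, the only thing needing care is fact (i), the ``unassigned agents $\leftrightarrow$ unassigned goods'' bijection, which is precisely what licenses the inductive step; everything else is the standard TTC optimality bookkeeping.
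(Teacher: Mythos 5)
Your proof takes essentially the same route as the paper's: a round-by-round induction showing that any matching $\nu$ weakly preferred by everyone must give each agent assigned in round $t$ exactly $\mu$'s assignment, since $\mu(i)$ is already $i$'s $\succ_i$-best among the goods still unassigned, forcing $\nu=\mu$ and hence no strict Pareto improvement. Your added bookkeeping (the unassigned-agents/unassigned-goods bijection and the remark that dominance must be judged on the restricted orders $\succ_i=\mathcal{P}_i(\mathcal{G}_i)$) only makes explicit what the paper's induction leaves implicit, so the argument is correct and matches the paper's.
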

\begin{proof}
\rm Suppose $\mu$ is not Pareto optimal. In this case, $\mu$ is Pareto dominated by some other allocation $\nu$. Let us give the proof by induction.
\begin{itemize}
    \item For agents who trade in cycle $c_1=(V_{c_1},E_{c_1})$ at the first round, they must receive the same good in $\nu$ since these agents have received their most favorite good in $\mu$, and $\nu$ cannot allocate them any better goods but the same ones as in $\mu$.
    \item For agents who trade in cycle $c_2=(V_{c_2},E_{c_2})$ at round 2, each $i\in V_{c_2}$ receive their first choice in $\mathcal{G}_i-\{g_j\mid \forall j\in V_{c_1}\}$. Therefore, they cannot get any other better good in $\nu$ but to receive the same in $\nu$, otherwise, they will get worse allocation in $\nu$. 
    \item Similarly, as the above shows, for agents in cycle $c_k=(V_{c_k},E_{c_k})$ at round $k$, we assume that for each $i\in V_{c_1}\cup \cdots \cup V_{c_k}$, $\nu(i)=\mu(i)$. Then for each agent $j$ who trades in $c_{k+1}$ at round $k+1$, we must have $\nu(j)=\mu(j)$. Since for agents trade in $c_{k+1}$, they have received first choice among $\mathcal{G}_j-\{g_j\mid \forall j\in (V_{c_1}\cup \cdots \cup V_{c_k})\}$ and they cannot get any better goods in $\nu$ and thus, $\nu(j)=\mu(j)$. 
    \item Therefore, for any $t\in \mathbb{N}$, we have $\nu=\mu$
\end{itemize}
We have shown that $\nu=\mu$ by induction, which means the assumption does not hold and $\mu$ outputs a \textit{Pareto optimal} matching $\mu$.
\end{proof}

\section{Conclusions and Future Work}
In this paper, we generalize the barter exchange problem in social networks, where each agent owns a good and hopes that more agents can participate in the exchange. We first analyze the problem by directly applying TTC in social networks. Then we prove the impossible result that no mechanism can be IC while also achieves a Pareto optimal matching in the network without any constraints. Following the impossibility result, we restrict the goods that each agent in the network can know about and propose a novel matching mechanism called TTCI. Our mechanism is IC, IR and outputs a Pareto optimal matching. TTCI guarantees that once the agent is in the trade cycle (formed by agents' pointing to their favorite remaining good ) with her favorite good, no other agent in the network can compete with her. This is the key to guarantee that all agents are willing to invite their neighbors even though they have the same preferences.

One of the most important and challenging future work is to sacrifice the property of Pareto optimal to achieve an IC mechanism where all the goods in the social network are visible to all participants. That is agent can get a good from a totally ``stranger". The main problem here is the priority conflicts. It seems like a Russian Dolls game here. Maybe letting agents point to their ancestors' goods is a good start, or more constraints can be put on the structure of the generated graph.   

\bibliography{mybibliography}

\begin{thebibliography}{20}
\providecommand{\natexlab}[1]{#1}
\providecommand{\url}[1]{\texttt{#1}}
\providecommand{\urlprefix}{URL }
\expandafter\ifx\csname urlstyle\endcsname\relax
  \providecommand{\doi}[1]{doi:\discretionary{}{}{}#1}\else
  \providecommand{\doi}{doi:\discretionary{}{}{}\begingroup
  \urlstyle{rm}\Url}\fi

\bibitem[{Abdulkadiroglu and S{\"o}nmez(2013)}]{abdulkadiroglu2013matching}
Abdulkadiroglu, A.; and S{\"o}nmez, T. 2013.
\newblock Matching markets: Theory and practice.
\newblock \emph{Advances in Economics and Econometrics} 1: 3--47.

\bibitem[{Andersson and Sandholm(1998)}]{andersson1998contract}
Andersson, M.~R.; and Sandholm, T.~W. 1998.
\newblock Contract types for satisficing task allocation: II experimental
  results.
\newblock In \emph{AAAI spring symposium series: Satisficing models}, volume~1.
  Stanford University.

\bibitem[{Anno(2015)}]{anno2015short}
Anno, H. 2015.
\newblock A short proof for the characterization of the core in housing
  markets.
\newblock \emph{Economics Letters} 126: 66--67.

\bibitem[{Arcaute and Vassilvitskii(2009)}]{arcaute2009social}
Arcaute, E.; and Vassilvitskii, S. 2009.
\newblock Social networks and stable matchings in the job market.
\newblock In \emph{International Workshop on Internet and Network Economics},
  220--231. Springer.

\bibitem[{Aziz and De~Keijzer(2012)}]{aziz2012housing}
Aziz, H.; and De~Keijzer, B. 2012.
\newblock Housing Markets with Indifferences: A Tale of Two Mechanisms.
\newblock In \emph{AAAI}, volume~12, 1--1.

\bibitem[{Bailey et~al.(2018)Bailey, Cao, Kuchler, and
  Stroebel}]{bailey2018economic}
Bailey, M.; Cao, R.; Kuchler, T.; and Stroebel, J. 2018.
\newblock The economic effects of social networks: Evidence from the housing
  market.
\newblock \emph{Journal of Political Economy} 126(6): 2224--2276.

\bibitem[{Calvo-Armengol and Jackson(2004)}]{calvo2004effects}
Calvo-Armengol, A.; and Jackson, M.~O. 2004.
\newblock The effects of social networks on employment and inequality.
\newblock \emph{American economic review} 94(3): 426--454.

\bibitem[{Calv{\'o}-Armengol and Jackson(2007)}]{calvo2007networks}
Calv{\'o}-Armengol, A.; and Jackson, M.~O. 2007.
\newblock Networks in labor markets: Wage and employment dynamics and
  inequality.
\newblock \emph{Journal of economic theory} 132(1): 27--46.

\bibitem[{Damamme et~al.(2015)Damamme, Beynier, Chevaleyre, and
  Maudet}]{damamme2015power}
Damamme, A.; Beynier, A.; Chevaleyre, Y.; and Maudet, N. 2015.
\newblock The power of swap deals in distributed resource allocation.

\bibitem[{Gardenfors(1973)}]{gardenfors1973assignment}
Gardenfors, P. 1973.
\newblock Assignment problem based on ordinal preferences.
\newblock \emph{Management Science} 20(3): 331--340.

\bibitem[{Gourv{\`e}s, Lesca, and Wilczynski(2017)}]{gourves2017object}
Gourv{\`e}s, L.; Lesca, J.; and Wilczynski, A. 2017.
\newblock Object allocation via swaps along a social network.

\bibitem[{Haeringer(2018)}]{haeringer2018market}
Haeringer, G. 2018.
\newblock \emph{Market design: auctions and matching}.
\newblock MIT Press.

\bibitem[{Hylland and Zeckhauser(1979)}]{hylland1979efficient}
Hylland, A.; and Zeckhauser, R. 1979.
\newblock The efficient allocation of individuals to positions.
\newblock \emph{Journal of Political economy} 87(2): 293--314.

\bibitem[{Jackson(2010)}]{jackson2010social}
Jackson, M.~O. 2010.
\newblock \emph{Social and economic networks}.
\newblock Princeton university press.

\bibitem[{Li et~al.(2017)Li, Hao, Zhao, and Zhou}]{li2017mechanism}
Li, B.; Hao, D.; Zhao, D.; and Zhou, T. 2017.
\newblock Mechanism Design in Social Networks.
\newblock In \emph{Proceedings of the Thirty-First {AAAI} Conference on
  Artificial Intelligence, February 4-9, 2017, San Francisco, California,
  {USA}}, 586--592.

\bibitem[{Roth, S{\"o}nmez, and {\"U}nver(2004)}]{roth2004kidney}
Roth, A.~E.; S{\"o}nmez, T.; and {\"U}nver, M.~U. 2004.
\newblock Kidney exchange.
\newblock \emph{The Quarterly journal of economics} 119(2): 457--488.

\bibitem[{Roth, S{\"o}nmez, and {\"U}nver(2005)}]{roth2005pairwise}
Roth, A.~E.; S{\"o}nmez, T.; and {\"U}nver, M.~U. 2005.
\newblock Pairwise kidney exchange.
\newblock \emph{Journal of Economic theory} 125(2): 151--188.

\bibitem[{Shapley and Scarf(1974)}]{shapley1974cores}
Shapley, L.; and Scarf, H. 1974.
\newblock On cores and indivisibility.
\newblock \emph{Journal of mathematical economics} 1(1): 23--37.

\bibitem[{S{\"o}nmez, {\"U}nver, and Yenmez(2020)}]{sonmez2020incentivized}
S{\"o}nmez, T.; {\"U}nver, M.~U.; and Yenmez, M.~B. 2020.
\newblock Incentivized kidney exchange.
\newblock \emph{American Economic Review} 110(7): 2198--2224.

\bibitem[{Zhao et~al.(2018)Zhao, Li, Xu, Hao, and Jennings}]{zhao2019selling}
Zhao, D.; Li, B.; Xu, J.; Hao, D.; and Jennings, N.~R. 2018.
\newblock Selling Multiple Items via Social Networks.
\newblock In \emph{Proceedings of the 17th International Conference on
  Autonomous Agents and MultiAgent Systems, {AAMAS} 2018, Stockholm, Sweden,
  July 10-15, 2018}, 68--76.

\end{thebibliography}
\end{document}